\documentclass[11pt]{article}
\usepackage{graphicx}
\usepackage{amsmath}
\usepackage{dsfont}
\usepackage{amsthm}
\newtheorem{theorem}{Theorem}[section]
\usepackage[top=2.7cm,left=2.7cm, right=2.7cm, bottom=2.7cm]{geometry}

\def\beta{\lambda}
\def\gamma{\mu}

\title{ Sampling through time and phylodynamic inference with coalescent and birth-death models}
\author{Erik M Volz$^{1*}$ and Simon DW Frost$^2$\\
 { \footnotesize 1. Department of Infectious Disease Epidemiology, Imperial College London, United Kingdom } \\
 { \footnotesize 2. Department of Veterinary Medicine, University of Cambridge, Cambridge, United Kingdom }\\
 { \footnotesize   $*$ Corresponding author: e.volz@imperial.ac.uk }
}
\date{\today}

\begin{document}
\maketitle
\begin{abstract}
Many population genetic models have been developed for the purpose of inferring population size and growth rates from random samples of genetic data. 
We examine two popular approaches to this problem, the coalescent and the birth-death-sampling model, in the context of estimating population size and birth rates in a population growing exponentially according to the birth-death branching process. 
For sequences sampled at a single time, we found the coalescent and the birth-death-sampling model gave virtually indistinguishable results in terms of the growth rates and fraction of the population sampled, even when sampling from a small population. 
For sequences sampled at multiple time points, we find that the birth-death model estimators are subject to large bias if the sampling process is misspecified. 
Since birth-death-sampling models incorporate a model of the sampling process, we show how much of the statistical power of birth-death-sampling models arises from the sequence of sample times and not from the genealogical tree. 
This motivates the development of a new coalescent estimator, which is augmented with a model of the known sampling process and is potentially more precise than the coalescent that does not use sample time information.
\end{abstract}

The genetic diversity of many pathogens is influenced by recent epidemiological history, and a variety of methods exist to estimate features of an epidemic history given random samples of pathogen genetic markers \cite{volz2013viral}. An issue that is central to how pathogen genetic diversity is understood is how infected individuals are sampled. A great deal of theory has been developed under the assumption of complete sampling, that is, that all infected individuals in the population are sampled and provide at least one pathogen sequence. These methods have found great utility for the study of small outbreaks \cite{didelot2014bayesian,ypma2011unravelling}, and for certain hospital acquired infections \cite{snitkin2012tracking}. A separate body of theory has developed for the study of epidemics where a sample of hosts is obtained for pathogen sequencing, and these methods are derived from classical population genetic models such as the coalescent \cite{kingman1982coalescent,wakeley2009coalescent} and classical population dynamics models such as the birth death process \cite{rannala1996probability,kendall1948generalized}. This manuscript considers the scenario of incomplete sampling and the potentially confounding effects of non-random sampling through time on inference using the coalescent and birth-death-sampling formuli \cite{stadler2010sampling}. 

The coalescent is a mathematical model of genealogies and describes the structure of genealogies generated by different demographic processes \cite{rosenberg2002genealogical}. The coalescent has been the standard tool for demographic inference and is the underlying genealogical model in most phylogenetic software \cite{guindon2010new,drummond2012bayesian}. Under the neutral coalescent, the time between consecutive common ancestry events (the internode intervals) are modeled as a point process with a hazard rate $r(t)$ that depends on the effective population size $N_e(t)$ and the number of extant lineages in that interval $A(t)$ at time $t$ in the past. With time in units of the generation interval $\tau$, this becomes
$$ r(t)  = {A(t) \choose 2} / N_e(t) $$
By relating the time of common ancestry to the population size, the coalescent enables estimation of the latter. A variety of non-parametric \cite{Pybus2000,strimmer2001exploring,opgen2005inference} and parametric \cite{donnelly1995coalescents,Pybus2000,volz2012complex} models have been developed for $N_e$ as a function of time. The parametric models for $N_e(t)$ tend to be deterministic functions of time, and we will consider such deterministic models in this paper, although there have been several recent attempts to fit stochastic demographic process models using the coalescent \cite{rasmussen2011inference,rasmussen2014phylodynamic}.

Birth-death processes trace their origins to work by Kendall\cite{kendall1948generalized}, who showed how to calculate the probability that a given number of lineages will survive up to a given point of time in a stochastically growing population. Further results were developed by \cite{thompson1975human,gernhard2008conditioned}, who showed how to calculate the probability density of genealogies generated by the birth-death process process under complete sampling. These models were subsequently extended to account for incomplete sampling of the population by Stadler \textit{et al.} \cite{stadler2010sampling}. In order to account for incomplete sampling, the birth-death process must be combined with a model of the sampling process. Two sampling processes have thus far been considered in birth-death-sampling models: sampling of lineages may take place at a constant rate; or, at a given point in time, a proportion of lineages may be sampled uniformly at random. These sampling processes may be combined, and recently-developed methods allow sampling rates to vary through time according to a step function ~\cite{stadler2013birth}. 

There are many variations on the coalescent and birth death models that could be compared. 
Different coalescent and birth-death models make different assumptions about the demographic and sampling process,
 and each will be susceptible to different levels of bias by violation of those assumptions. 
We will focus on two models that have recently received considerable attention and have been used in epidemiological investigations. 
We use the birth-death-sampling model (henceforth abbreviated BDM) described in \cite{stadler2010sampling}
, and the the coalescent model (henceforth abbreviated CoM) for an unstructured population as described in \cite{volz2012complex}. 
Originally, CoMs were based on restrictive assumptions about the proportion of the population sampled and when taxa are sampled. 
CoMs were also based on strictly deterministic demographic processes,
 but all of these assumptions have been relaxed since the coalescent was first introduced. 
BDMs were originally based on census sampling at a single point in time,
 but that assumption has also been relaxed. 
Both models have been extended to consider heterogeneous structured populations \cite{volz2012complex,stadler2013uncovering}.

The likelihood of a genealogy given a demographic history may be calculated using either the CoM or the BDM, 
though these two models have very different mathematical foundations. 
The likelihood functions provided by each approach are difficult to reconcile mathematically, yet they tend to give similar results as we demonstrate below. 
The BDM has the advantage of accounting for stochasticity of the demographic process in an efficient and natural way. 
It is also possible to account for stochastically varying effective population size in the coalescent, 
but this has greater computational requirements \cite{rasmussen2011inference}. 
A potential disadvantage of BDMs is that they require a model of the sampling process, 
whereas the coalescent makes no assumptions about how lineages are sampled through time. 
If the sampling process deviates from the simplistic processes that form the basis of current BDM theory, 
it is possible that estimates based on the BDM will be biased. 

Both methods have particular advantages and vulnerabilities. 
Estimates based on CoMs may be biased by noisy demographic processes, 
and estimates based on the BDM may be biased by misspecification of the sampling process. 
In this manuscript, we will evaluate the vulnerability of both methods to these confounders. 
Because of the additional assumptions about sampling built into BDMs, 
it is difficult to make a direct comparison of the statistical power of BDMs and CoMs. 
If the sampling process is correctly specified, the observed sequence of sample times provides a great deal of information about 
the population size through time, which is not directly accessible with the CoM approach. 
Indeed, given a sequence of sample times, it is possible to estimate birth and death rates without a genealogy provided 
that the model of the sampling process is correctly specified (section~\ref{sec:tos}). 
We show that much of the statistical power of the BDM approach is derived from information in the sequence of sample times, 
and not in the genealogy. 
This finding also suggests an enhancement to CoMs: 
if the sampling process is known, we can augment the CoM likelihood with a separate likelihood for the sequence of sample times. 
This augmented coalescent method is presented in section~\ref{sec:augco}. 

If sampling at a single time point (homochronously) we show that estimates based on CoMs and BDMs are very similar. In section~\ref{sec:dkb}, we show how the distribution of coalescent times predicted by CoM converges with large sample size to the distribution given by BDM.

\section{The demographic and sampling processes \label{sec:demogProcess}}
	The population size $Y(t)$ is modelled as a continous time Markov chain on $[0,\infty)$, which is governed by the following transition probabilities:
	\begin{align}
		P( Y(t+\Delta t) &= Y(t) + 1) = \lambda Y(t) \Delta t  + O((\Delta t)^2) \notag \\
		P( Y(t+\Delta t) &= Y(t) - 1) = \mu Y(t) \Delta t + O((\Delta t)^2) \\
		P( Y(t+\Delta t) &= Y(t) ) = 1 - (\lambda + \mu) Y(t) \Delta t + O((\Delta t)^2) \notag
	\end{align}
	where $\lambda$ and $\mu$ are the per-capita birth and death rates of the process, respectively. 
	Initially,  $Y(0) = 1$. 
	
	We investigated three distinct sampling processes for the reconstruction of genealogies from a simulated demographic history:
	\begin{enumerate}
		\item Continuous sampling through time at constant rate. According to this model, after a lineage dies (with a per-lineage rate $\mu$), it is sampled with independent probability $p$.
		\item Homochronous sampling. According to this model, every extant lineage at a predetermined time point is sampled with independent probability $\psi$. 
		\item Weighted sampling through time at changing rate. According to this model, a weighted sample of $n$ lineages at the time of death is taken with sampling weights that depend on time. If $\{t_i\}$ is the set of death times for lineages indexed by $i$, the sample weights are $w_i = e^{\alpha t_i}$. 
	\end{enumerate}
	Note that with the exception of homochronous sampling, the lineages are only sampled at the time of death. This design is chosen for mathematical convenience, since it eliminates the possibility of a zero-branch length in the genealogy, but both BDMs and CoMs can be generalized to this situation. 
	
	In this manuscript, we use CoMs based on the following deterministic approximation $y(t)$ to the stochastic process $Y(t)$:
	\begin{align}
	y(t) &= y(0) e^{t(\lambda - \mu)}, 
	\end{align}
	with real-valued initial conditions $y(0)$ that will be estimated. 
	
	Genealogies were generated by simulation of the birth-death process in continuous time using the software MASTER 1.7.1 \cite{vaughan2013stochastic}.
	For simulating genealogies with a time-dependent sampling rate, we developed a custom simulator for the birth-death process in Python (see supporting information). 
	We simulated 300 genealogies for each of the three sampling scenarios given above using $\mu=1$ and $\lambda=2$ or $\lambda=1.25$. 
	In the case of sampling through time, we terminated the simulation when 100 samples were collected and using a sampling probability of $p=1\%$ or $p=50\%$. 
	If sampling homochronously, we sampled 100 taxa after 9.21 or 25 units of time, yielding a sample proportion that varied around 1\% or 20\%, respectively.
	Simulations that failed to reach the target sample size were removed. 

\section{Estimation methods}
	All models are fitted by maximum likelihood (ML). The choice of ML was motivated by the simplicity of the demographic process, the small number of free parameters, and the possibility that arbitrary priors in a Bayesian framework might obfuscate the important differences between methods.
	For the exponential growth process, there are four potential parameters that could be estimated: birth rates $\lambda$, death rates $\mu$, the initial population size $y(0)$ (needed for CoMs but not BDMs), and a parameter that describes sampling (needed for BDMs but not CoMs). 
	As previously shown in the analysis of BDMs, at most two of these parameters are identifiable from a genealogy alone, and we must therefore choose which parameters to fix according to prior knowledge, and CoMs are subject to the same identifiability constraints.
	We focus on an epidemiologically plausible scenario, where birth rates and the number of infections are unknown, but independent clinical information provides information on death rates.
	Consequently, we will assume $\mu=1$ is known, and will focus on the estimation of birth rate $\lambda$ along with the nuisance parameters describing initial population size (for CoMs) or sampling rates (for BDMs). We will also consider the special case of homochronous sampling, in which we can reparameterise the CoM such that, like the BDM, estimates of the sampling fraction can be obtained. 
	
	Throughout the remainder of the paper, we will use two symbols to denote time on different axes, and all dynamic variables will be defined on both axes. $t$ will denote time from an arbitrary point in the past, while $s$ will denote time before present. It will be useful to define the population genetic models in terms of the retrospective time axis $s$. 
	
	Let $\mathcal{G} = (\mathcal{N}, \mathcal{E}, X)$ represent a genealogy consisting of a set of nodes $\mathcal{N}$, edges $\mathcal{E}$ and a function $X:\mathcal{N}\rightarrow \mathds{R}$ that gives the time $s$ before the present of each node. Every edge corresponds to a 2-tuple $(u,v)$ such that $u,v \in \mathcal{N}$ and the node $u$ is said to be ancestral to $v$. We will consider only rooted binary genealogies; every internal node has exactly two descendents, and all internal nodes but the root have exactly one ancestor.
	
	For CoMs, we use the likelihood given in \cite{volz2012complex}, and we denote the MLE birth rate $\hat{\lambda}^C$. \\
	This likelihood is that of a time-inhomogenous point process with a hazard rate that depends on the population size and number of extant lineages. 
	Specifically, following the approach in~\cite{volz2012complex}, the total population birth rate will be denoted $f(s) = \lambda y(s)$ and the rate of coalescence is 
	\begin{align}
	r(s) &= f(s) \frac{{A(s) \choose 2}}{{y(s) \choose 2}}  \notag  \\
	     &=  {A(s) \choose 2} \frac{2\lambda}{y(s)-1}, 
	\end{align}
	where the term on the right can be understood as the hypergeomatric probability of selecting two lineages that are ancestral to the sample out of the set of $y(s)$ lineages. 
	Now let $x'$ denote the vector of node times (including sampled tips) in ascending order. 
	The likelihood of the $i$'th interval is 
	\begin{align}
	L_i &= 
		\begin{cases}
		e^{-\int_{x_i}^{x_{i+1}} r(s) \mathrm{d} s} & x_{i+1}\textrm{ is a sample time}\\
		r(x_{i+1}) e^{-\int_{x_i}^{x_{i+1}} r(s) \mathrm{d} s} & x_{i+1}\textrm{ is a coalescent time}\\
		\end{cases}
	\end{align}
	And the likelihood is 
	\begin{align}
		\mathcal{L}_{com}(\mathcal{G} | \lambda, \mu, y_0) &= \prod_{i=0}^{2n-2} L_i
	\end{align}
	Note that the number of terms in the likelihood is the number of internode intervals $2n-2$ if all sampling times are distinct. One subtlety arises if more than one lineage is sampled at a single time point, such as with a homochronous sample, in which case we simply deduplicate elements in the vector $x'$ and adjust the number of terms in the likelihood. 
	
	For BDMs, we used the ML framework described in \cite{stadler2010sampling}. 
	We denote the MLE birth rate $\hat{\lambda}^{BD}$.
	The R package \emph{expoTree} \cite{leventhal2014using} was used along with the implementation described here, and all results presented below are based on the best performing of the two implementations of the BDM likelihood. 
	We simplified the likelihood equations in~\cite{stadler2010sampling} to two situations: sampling occurs at a single time point with sample fraction $\rho$, or individuals are sampled with probability $p$ at the time of death. 
	Let $x$ denote the vector of times before present for each internal node in $\mathcal{G}$ in descending order. 
	Note that $x_0$ corresponds to the root of the tree. 
	If the sampling takes place according to the homochronous process, $\rho$ will denote the probability of sampling a lineage at a single point in time. 
	Then,
	\begin{align} \label{eqn:bdm1}
	\mathcal{L}_{bdm}(\mathcal{G} | \lambda, \mu, \rho) &= \lambda^{n-1} (4 \rho)^n \prod_{i=0}^{n-2} q(x_i,c_2)^{-1} \left( \int_{x_{or}=x_0}^\infty q(x_{or}, c_2)^{-1}  \mathrm{d} x_{or} \right), 
	\end{align}
	where $q(\cdot)$ is derived from the birth-death sampling formuli: 
	\begin{align}
		q(s,c) &=  2 (1-c^2) + e^{-c_1 s} (1-c)^2 + e^{c_1 s} (1+c)^2, 
	\end{align}
	and $c_1$ and $c_2$ are the following constants:
	\begin{align}
	c_1 &= | \beta - \delta |  \\
	c_2 &= -(\beta - \delta - 2 \beta \rho) / c_1
	\end{align}
	Note that the integral in the likelihood equation is an attempt to account for the unobserved time of origin of the birth-death process.

	If sampling heterochronously at rate $\psi = \mu p$, the likelihood has a different form. 
	Let $x$ denote the vector times before present of each node as above, and let $y$ denote the vector of sample times in any order. 
	\begin{equation}
	\begin{split}
	\mathcal{L}_{bdm}(\mathcal{G} | \lambda, \mu, \psi) &= 
		 \int_{x_{or}=x_0}^\infty  \lambda^{n-1} \psi^n \{~ q(x_{or, c_2})^{-1} \prod_{i=0}^{n-1} q(y_i, c_2) \prod_{i=0}^{n-2} q(x_i, c_2)^{-1}   \\
		  &-  q(x_{or}, c_3)^{-1} \prod_{i=0}^{n-1} q(y_i, c_3) \prod_{i=0}^{n-2} q(x_i, c_3)^{-1} \} \mathrm{d} x_{or}, 
	\end{split}
	\end{equation}
	and, $c_1$ and $c_2$ and $c_3$ are the following constants:
	\begin{align}
	c_1 &= \sqrt{(\lambda-\mu)^2 + 4 \lambda \psi} \\
	c_2 &= ( \mu - \lambda ) / c_1  \\
	c_3 &= (\lambda + \mu) / c_1
	\end{align}
	
	BDMs and CoMs were fitted according to the same numerical algorithm, with maximization of the likelihood accomplished in R using the simplex method. In order to assure convergence to the global maximum, multiple starting conditions were drawn from a multivariate uniform distribution, and the likelihood optimized for each. The best model fit is reported among the three or five optimizations, although in general they converged to the same value.
	
	\subsection{Estimating birth rates using times of sampling \label{sec:tos}}
		Consider the sequence of sample times in increasing order $\vec{t} = (t_1, \cdots, t_n)$. 
		If the sampling process is known, the sequence of sample times are informative about population size.
		We will consider a simplistic sampling process such that individuals are sampled at a constant rate upon death, which is the sampling process underlying current BDMs. 
		If sampling occurs at a constant known rate, it is straightforward to estimate the historical population size from the sample times, since the probability that a sample will be observed at some point in time is proportional to population size at that time. 
		Therefore, it is possible to estimate the population size using sample time information alone, and not using the genealogy. 
		We show here that it is possible to estimate the birth rate, even if the sampling rate is unknown. 
		Two simple estimators for are presented. The first is based on a simple regression with the expected cumulative number of samples through time. 
		The second is based on treating the sample times as arising from a point process and using maximum likelihood. 
		
		Let $S(t)$ denote the cumulative number of samples collected up to time $t$. 
		We show that the cumulative number of samples increase at the same rate as the unknown population size.
		According to the deterministic model, the expected change in $S$ over time $\Delta t$ will be
		\begin{align}
		\Delta S(t) &= (\Delta t) p \mu y(t) + O((\Delta t)^2) \notag \\
		 &= (\Delta t) p \mu y(0) e^{(\lambda-\mu)t} + O((\Delta t)^2) 
		\end{align}
		Consequently, the logarithm of $S(t)\propto (\lambda-\mu) t$. 
		Regressing the vector $ \log(S(\vec{s}) $ on the vector $\vec{s}$ yields an estimate of the growth rate
		$k = \lambda-\mu $, and using knowledge of $\mu=1$ we have the regression estimator
		\begin{align}
		\hat{\lambda}^R  = \hat{k} + \mu
		\end{align} 
		
		Figure \ref{fig:regression_lt} shows the number of samples through time, for a single simulated genealogy along with the regression line. 
		\begin{figure}
		{ \centering  \includegraphics[width=.7\textwidth]{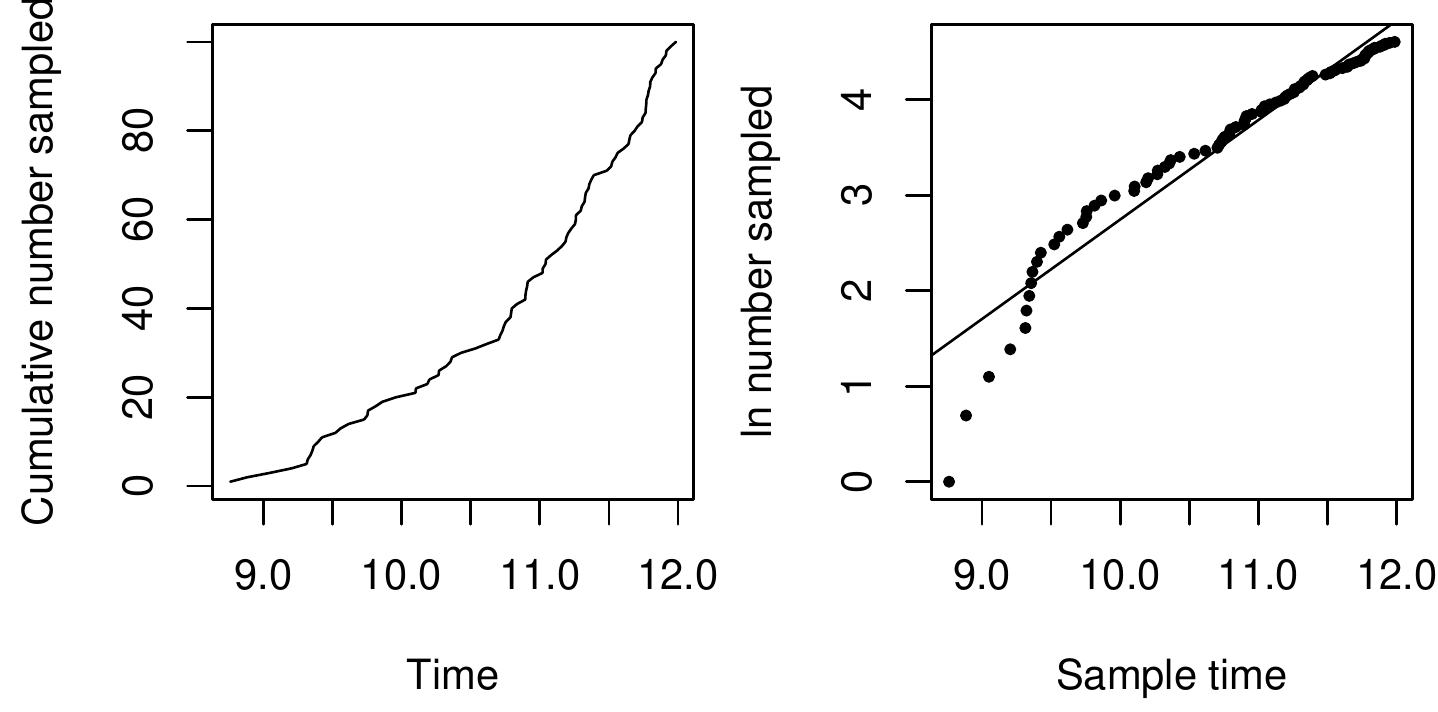}  }  
		\caption{Left: Cumulative number of samples through time. Right: Log cumulative samples with regression line.  \label{fig:regression_lt}}
		\end{figure}
		
		The likelihood approach is based on modeling the sequence of sample times as a point process and also makes use of the deterministic approximation to population size. 
		The rate of a sample appearing at time $t$ is 
		\[
		f(t) = p \mu y(0) e^{(\lambda-\mu)t} \\
		= a e^{k t},
		\]
		with $a = p \mu y(0)$ and $k = \lambda - \mu$.
		The probability of $\vec{s}$ is 
		\[
		P(\vec{t} | a, k) = \prod_{i=1} f(t_i) e^{-\int_{t_{i-1} }^{t_i} f(\tau) \mathrm{d} \tau}, 
		\]
		with $s_0 = 0$ for consistency. 
		As with the regression estimator, 
		\begin{align}
		\hat{\lambda}^S  = \hat{k} + \mu
		\end{align}
	
	\subsection{The augmented coalescent model \label{sec:augco}}
		The genealogy $\mathcal{G}$ and the sample times $\vec{s}$ are conditionally independent given demographic and sampling parameters $\theta=(\lambda,\mu,p, y(0))$. Therefore, the likelihood of both is the product of the marginal likelihoods given above:
		\begin{align} \label{eqn:augco}
		P(\mathcal{G},\vec{t} | \theta) = P(\mathcal{G} | \theta) P(\vec{t} | \theta )
		\end{align}
		We will denote the MLE birth rate as $\hat{\lambda}^A$. 

\section{Results}
	The following results demonstrate the level of bias, precision and efficiency of different inference methods when estimating the birth rate from genealogies generated by the birth death process. 
	\subsection{Constant sampling rate}
		Figure \ref{fig:vioplot-exponential0} shows the distribution of MLEs for five estimators presented above based on 300 simulated genealogies. 
		Simulations were based on a sampling process with constant sampling probability $p=1\%$ at the time of death. 
		
		Estimators based only on the sequence of sample times $\vec{s}$ perform well even though they do not use the coalescence times. 
		The ML estimator $\hat{\lambda}^S$ consistently outperforms the simple regression estimator $\hat{\lambda}^R$, presumably reflecting that residuals in the loglinear regression model are not i.i.d. normal. 
		
		Comparing a model that only uses sample time information ($\lambda^S$) with the CoM that only uses genealogical information ($\lambda^C$) we find that that the RMSE of $\hat{\lambda}^S$ is 11.5\% 
		compared to 11.8\% for $\hat{\lambda}^C$. 
		In this instance, the sample time sequence is actually more informative than the coalescent times for inferring birth rates. 
		
		Comparing the BDM and coalescent, we find that the BDM is more precise (RMSE = 0.085) but slightly less accurate; the average bias of the BDM estimator was 0.022 (95\% CI:(0.013, 0.031) ) compared to 0.009(95\% CI: (-0.022, 0.005)) for the CoM estimator. 
		Comparing the BDM and augmented CoM (a model that uses both coalescence and sample times), we find that the augmented CoM is slightly less precise than the BDM (RMSE of $\hat{\lambda}^A$ is 0.092), which may reflect the use of a misspecifed deterministic population size, however, we did not detect significant bias of $\hat{\lambda}^A$ (95\% CI:(-0.012,0.009)) in contrast to the BDM.












		\begin{figure}
		{ \centering  \includegraphics[width=.8\textwidth]{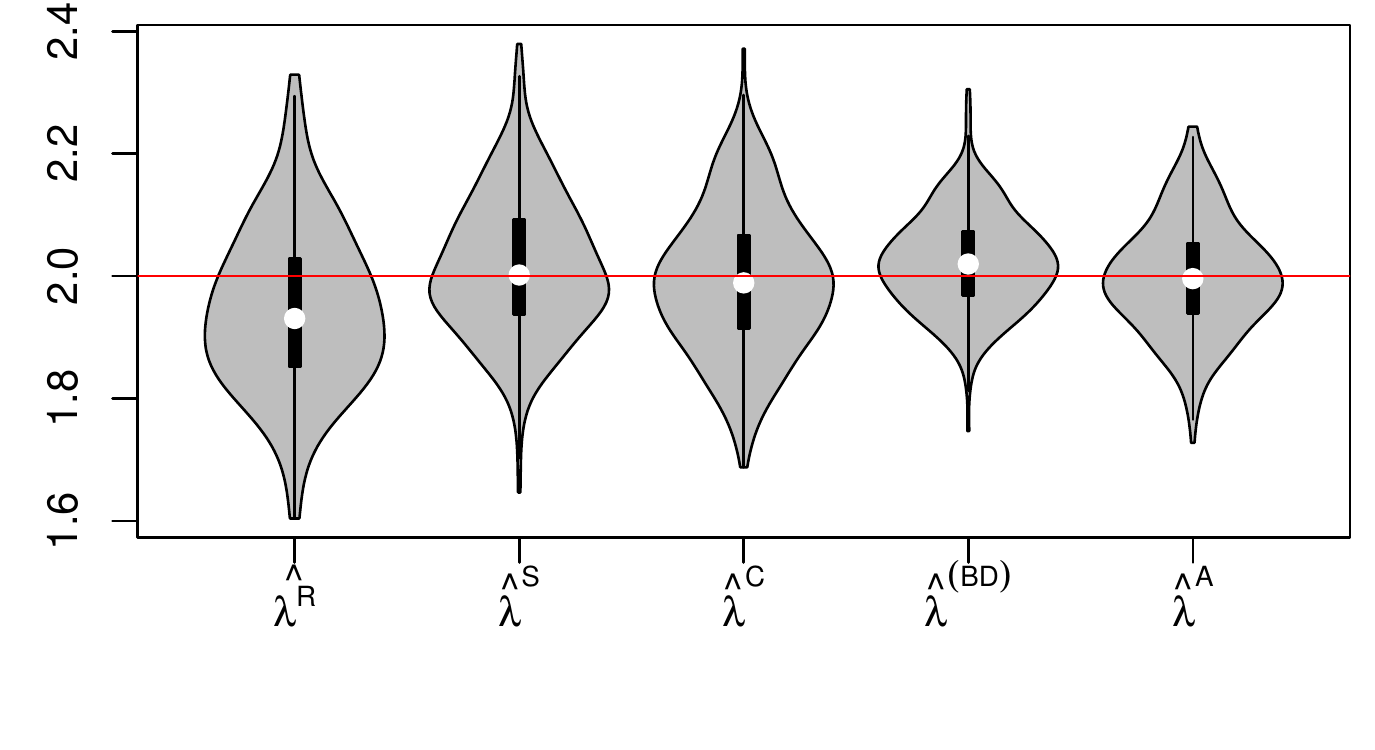} }
		\caption{ Distribution of MLE birth rates from 300 simulations with constant sampling rate and using five different estimators. The red line shows the true parameter value. $\hat{\lambda}^R$ is a regression estimator, $\hat{\lambda}^S$ is a ML estimator using sample times, $\hat{\lambda}^C$ is the coalescent estimator, $\hat{\lambda}^{(BD)}$ is the BDM estimator, and $\hat{\lambda}^A$ is the coalescent estimator that also uses sample times. \label{fig:vioplot-exponential0} }
		\end{figure}
		
		\begin{figure}
		{ \centering \includegraphics[width=.8\textwidth]{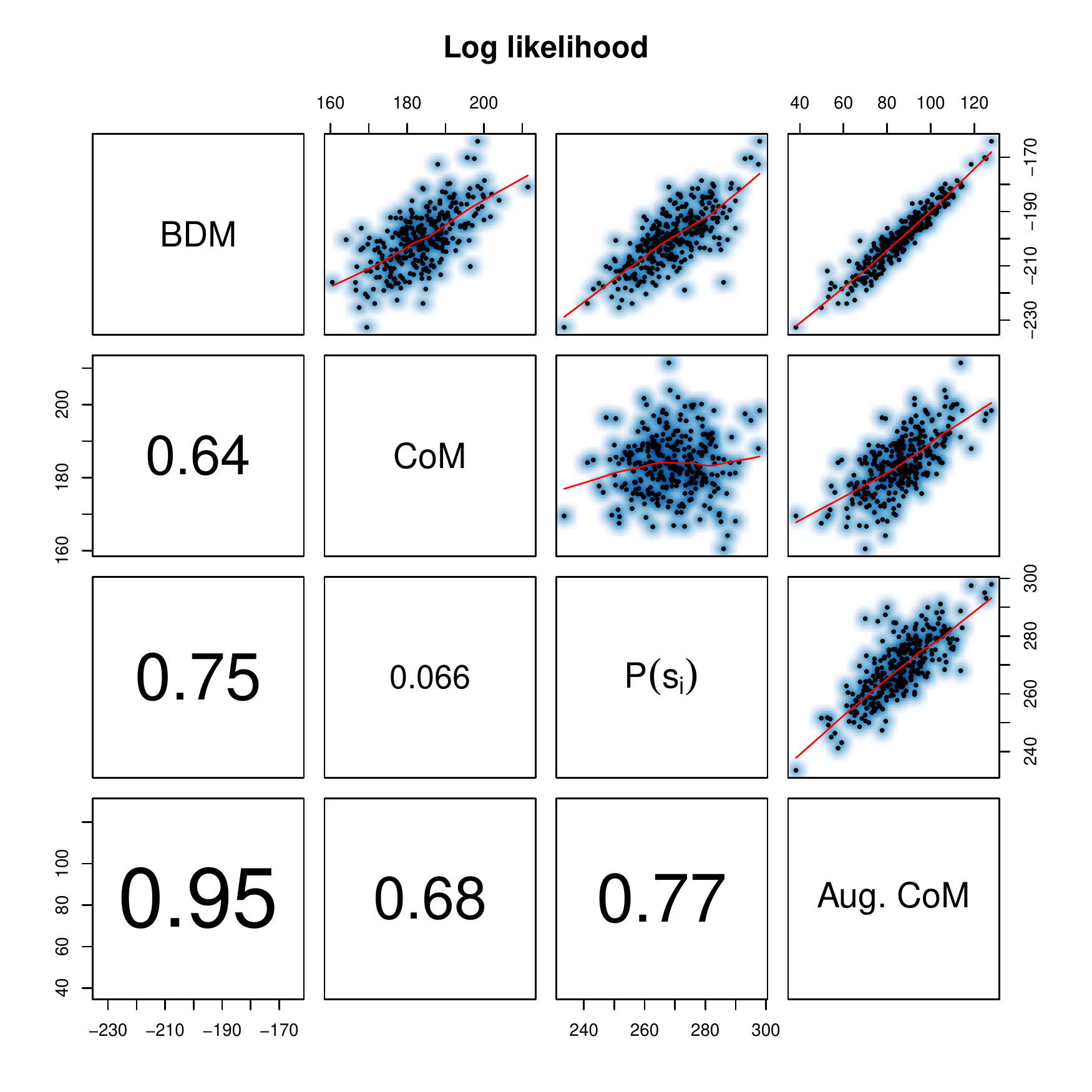}  }
		\caption{ The log likelihood corresponding to MLEs from four estimation methods and based on 300 simulated genealogies. The Pearson correlation coefficient between log likelihoods is shown in the lower panels. The upper panels show a scatter plot with smoothing splines.   \label{fig:likpairs}}
		\end{figure}

		Figure~\ref{fig:likpairs} sheds some light on why the estimators perform differently by comparing the maximum likelihood estimated by each method on each simulated genealogy. 
		SI figure~\ref{fig:lambdapairs} shows a similar scatter plot of MLE birth rates. 
		The BDM likelihood is highly correlated with that of all other estimators. 
		In contrast, the CoM likelihood is almost independent of the estimators that use sample times only ( Pearson correlation $= 0.066$). 
		The highest correlation is found between the BDM and the augmented CoM (Pearson correlation $=0.95$). 
		This illustrates that the CoM is not using sample time information, but the BDM and augmented CoM are using information from both the sample times and genealogy. 
	
	\subsection{Homochronous sampling \label{sec:homochron}}
		If all samples are collected at a single point in time, and if the sampling proportion is unknown, then the time of sampling and sample size confer no information about population size. 
		The homochronous sampling case with unknown sampling rate therefore provides a fair comparison for BDMs and CoMs. 
		Here we consider 300 simulations of the birth death process with a sample of $n=100$ at $t=9.2$, so that the sample fraction is around 1\%, though it differs between replicates.
		The birth rate used in the simulations was $\lambda=2$. 
		
		Figure~\ref{fig:vioplot-homochronous0} shows the distribution of MLE birth rates. 
		The distributions are very similar and have similar precision (RMSE of $\hat{\lambda}^{BD}$ is 0.106 and RMSE of $\hat{\lambda}^C$ is 0.101). 
		The CoM estimator does not have detectable bias (95\% CI of bias: (-0.0183,0.0048) ), but the BD model slightly overestimates birth rates (average bias = 0.036, 95\% CI: (0.0242,0.0470)). %
		Figure~\ref{fig:vioplot-homochronous0} also shows that the log likelihoods of the MLEs generated by both methods are highly concordant up to a constant factor. The Pearson correlation of BDM and CoM maximum likelihoods is 99.6\%. 
		The estimated birth rates also have a high correlation coefficient of 86.6\%. 
		
					

		\begin{figure}
		{ \centering \includegraphics[width=.8\textwidth]{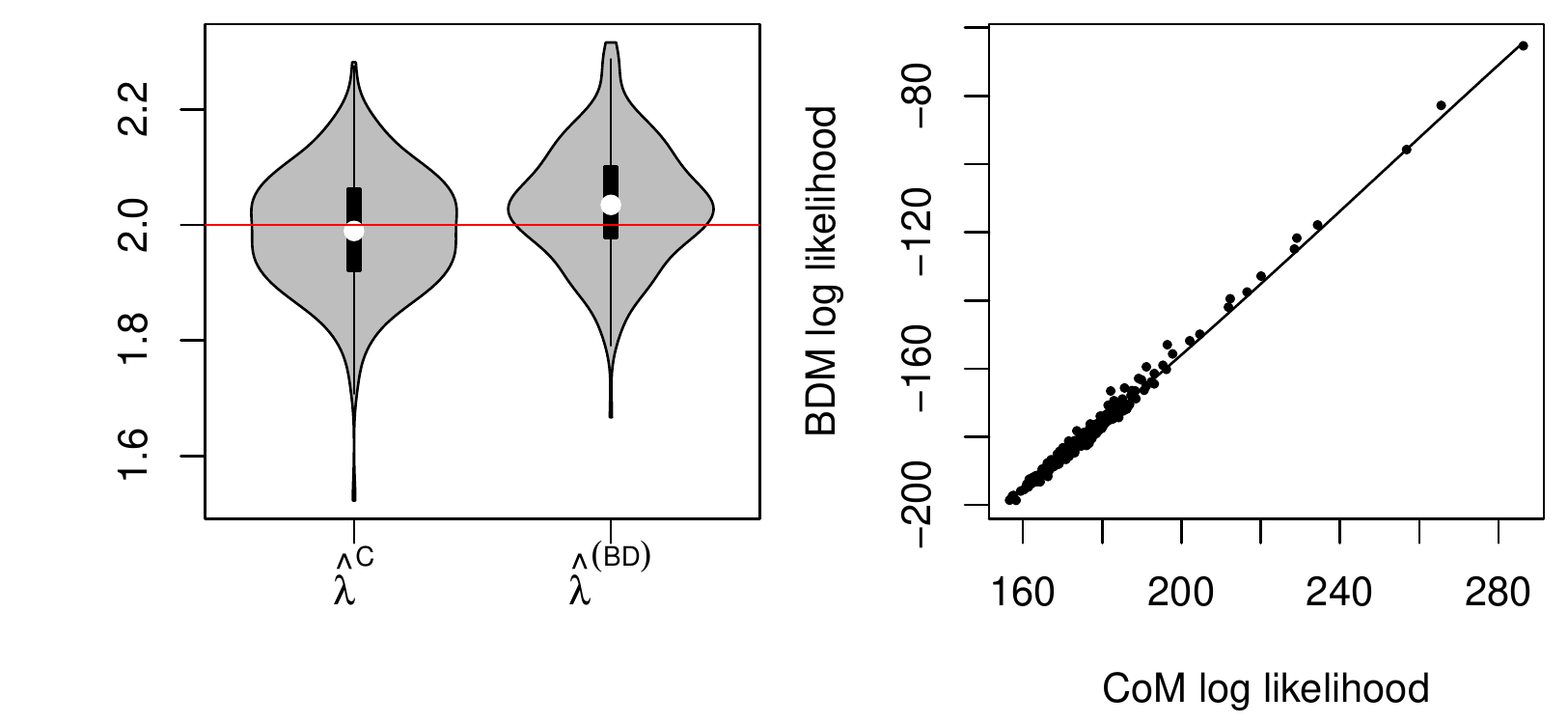} }
		\caption{ MLEs based on simulations with homochronous sampling. Left: Distribution of MLE birth rates from 300 simulations and using BDM and CoM estimators. The red line shows the true parameter value.
		Right: The log likelihoods of the BDM and CoM MLEs with a smoothing spline. 
		 \label{fig:vioplot-homochronous0} }
		\end{figure}
		
		Comparing the RMSE of the BDM estimator in both the homochronous and constant sampling rate cases, it appears that having informative sample time information decreases the residual sums-of-squares of the BDM estimator by about 36\%, but this gain in precision will certainly depend on parameters of the system and sample size.

		We repeated the simulation exercise with a smaller birth rate ($\lambda=1.25$) in order to assess if the CoM estimator would be less accurate if the population is growing more slowly. 
		The MLEs are depicted in SI figure~\ref{fig:vioplot-homochronous1}. 
		With the smaller birth rate, we do not detect significant bias of the BDM estimator (average bias $<1e-3$, 95\% CI:$(-0.0069,0.0077)$), or with the CoM estimator (average bias 0.002, 95\% CI of bias: $(-0.0057,0.0096)$). 
		The RMSE of the BDM and CoM estimators are similar (0.037 and 0.039 respectively). 
		

	\subsection{Comparison of estimated sample rates}
	    An alternative parameterization of the coalescent is in terms of the population size at the time of sampling in a homochronous scenario. 
	    In this case, we can calculate a deterministic approximation to the population size at time $s$ in the past as 
	    $$ y(s) = \frac{n}{\rho} e^{-s(\lambda - \mu)}, $$
	    where $n$ is the sample size and $\rho$ is the sample proportion, and $n/\rho$ is the population size at the time of sampling.
	    According to this parameterization, we replace the nuisance parameter $y(0)$ with $\rho$, and the coalescent estimates of the sample proportion can be directly compared to estimates with the BDM.
	    
	    We fit the reparameterized CoM to the the same genealogies used in section~\ref{sec:homochron} with $\lambda=1.25$ and $\mu=1$. 
	    We see that the estimates are highly concordant with Pearson correlation of 99.7\%. 
	    
	    \begin{figure}
	    \centering
	    \includegraphics[width=.5\textwidth]{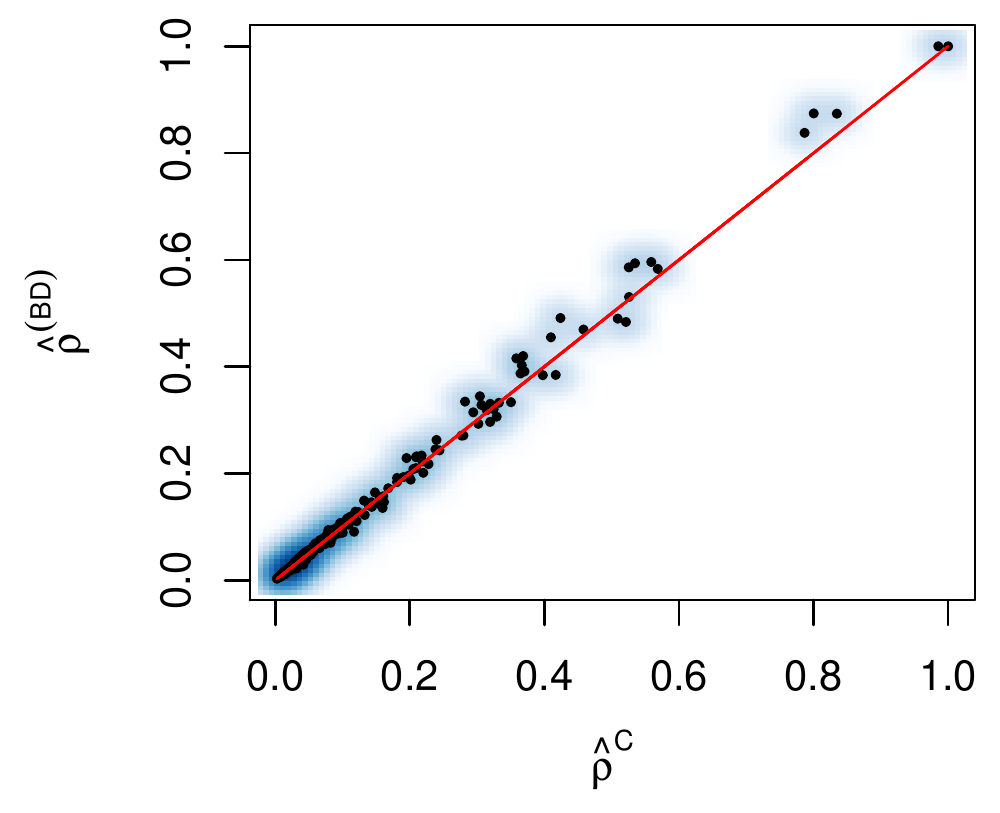}
	    \caption{Estimated sample proportions using the coalescent and BDM with homochronous sampling. $\lambda=1.25, \mu=1, n = 100$.}
	    \label{fig:rhocomparison}
	    \end{figure}
	    
    \subsection{Small reproduction number and high sample fraction}
        The CoM based on a deterministic demographic process may be most biased when the population size is small and subject to large stochastic fluctuations. 
        We generated 300 trees from the BD process with $\lambda=1.1, \mu=1$ and homochronous sampling with $n=100$ and a variable sample fraction around $50\%$.
        The distribution of MLE birth rates is shown in Figure~\ref{fig:homochron2}.
        
        We found small but significant bias in the estimated birth rates using both BDM and CoM methods. 
        The mean bias of the BDM estimator was $0.019$ (95\% CI: ( 0.0148, 0.0244)), and the mean bias of the CoM was $0.021$ (95\% CI of bias: (0.0160, 0.0265)). 
        The BDM had smaller RMSE (0.043 vs 0.47), and the Pearson correlation of estimated birth rates was 95\%. 
        A comparison of estimated birth rates is shown in SI figure~\ref{fig:homochron2lambdaPairs}.


        \begin{figure}
        \centering
        \includegraphics[width=.8\textwidth]{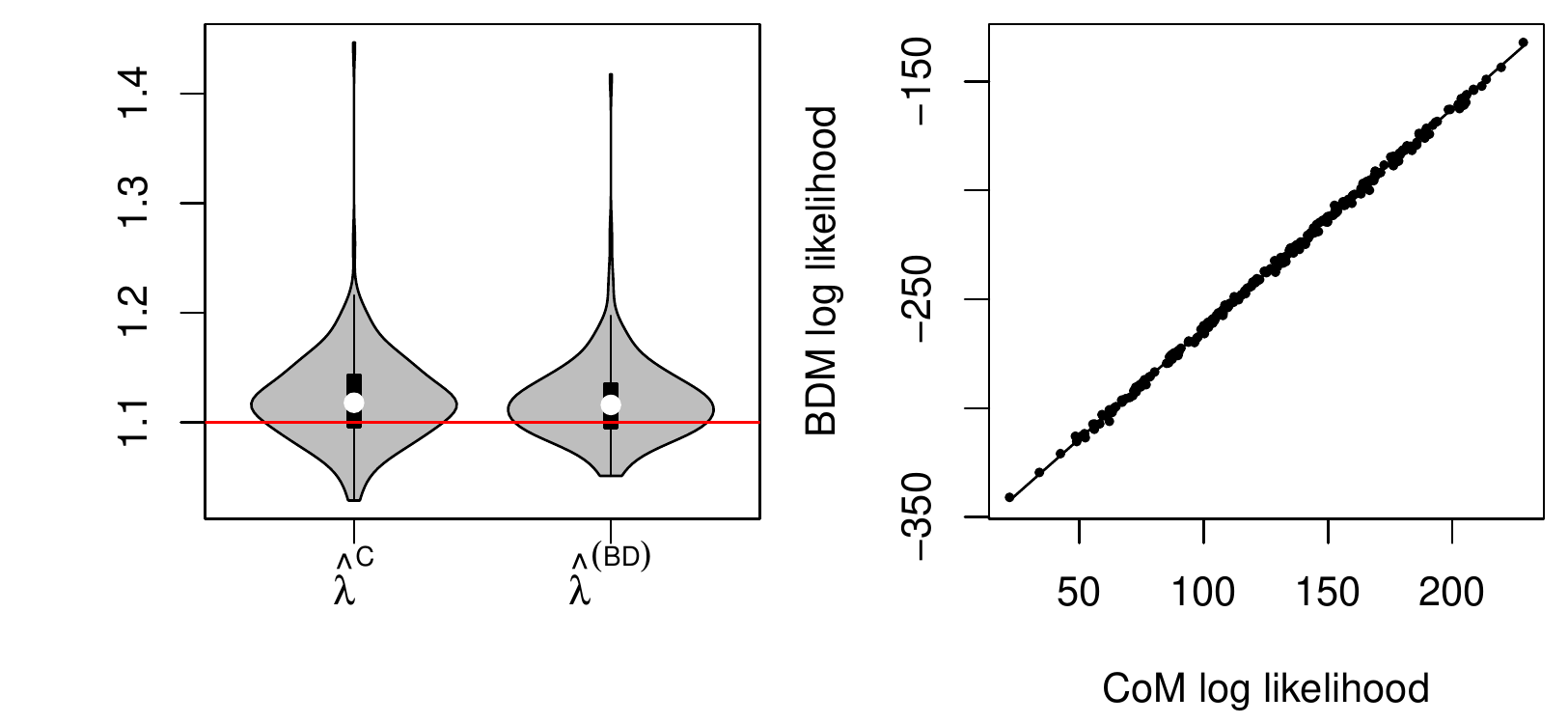}
        \caption{MLEs based on simulations with homochronous sampling, $\lambda=1.1, \mu=1$ and a variable sample fraction around $50\%$. Left: Distribution of MLE birth rates from 300 simulations and using BDM and CoM estimators. The red line shows the true parameter value.
		Right: The log likelihoods of the BDM and CoM MLEs with a smoothing spline. }
        \label{fig:homochron2}
        \end{figure}

	\subsection{Decreasing sample rate and small sample fraction}
		When the sampling model implicit to the BDM approach is misspecified, the BDM may yield highly biased results. 
		Figure~\ref{fig:vioplot-timeDepSampling0} shows the MLE birth rates for both the BDM and CoM estimators when the sampling rate changes through time according to $e^{\alpha t}$ (see section~\ref{sec:demogProcess}).
		120 simulations were carried out, and the sampling rate decreased at a rate of $\alpha=-0.44$. 
		This value was chosen so that the expected sample size would be 100 if taking a weighted sample of all lineages at the time of death. 
		Note that the sampling rate is an exponential function of time, so that the sequence of sample times still appears as though it arises from an exponentially increasing population, and there would be no warning from the sequence of sample times alone that the rate is changing. 
		The BDM estimates are biased downwards by 0.23 (95\% CI: (-0.2488, -0.2207)).
		
		In this scenario, the CoM is robust to changing sample rate, since the CoM conditions on observed sample times. 
		The CoM estimates did not have significant bias (95\% CI of bias: (-0.0443,0.0045)).

		\begin{figure}
		{ \centering \includegraphics[width=.8\textwidth ]{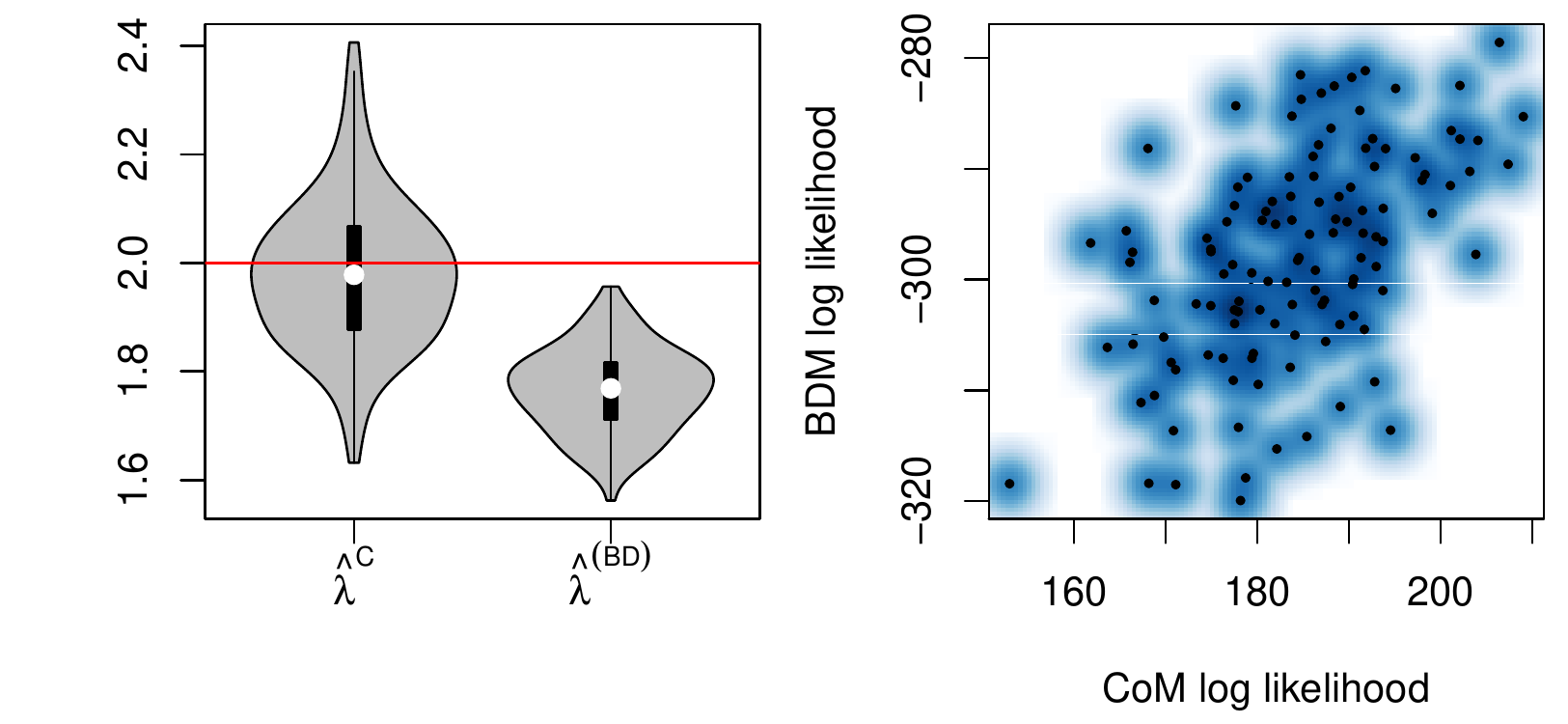} }
		\caption{ MLE birth rates based on simulations with time-dependent sampling. Left: Distribution of MLE birth rates from 120 simulations and using BDM and CoM estimators.
		Right: The log likelihoods of the BDM and CoM MLEs with a smoothing spline.   \label{fig:vioplot-timeDepSampling0}}
		\end{figure}
	
	\subsection{Increasing sample rate and large sample fraction}
		In these experiments, we examine bias in the coalescent due to sampling a large fraction of lineages from a small population growing stochastically. 
		300 genealogies with $n=100$ were simulated from a birth death process.
		Simulations were terminated when the number of deceased lineages reached 200, so that the sample fraction was 50\% of deceased lineages and about 25\% of all lineages. 
		In the same experiments, we exmained bias in BDMs due to a misspecified sampling process. 
		In these experiments, the sampling rate increases from zero at time zero at a rate of $\rho=\mu$. 
		
		Figure~\ref{fig:vioplot-timeDepSampling-highSampFrac} shows the distribution of MLE birth rates. 
		We do not find detectable bias with the CoM estimator (95\% CI:(-0.0271,0.0260)), despite using a misspecified deterministic approximation to the demographic process, and  despite that a large sample of the population was taken and that the population size was only around 400 on average at the time of the last sample.  
		
		Because the BDM relies on a misspecified sampling process, the BDM estimator gives highly biased estimates in this scenario. The average bias was 0.46 (95\% CI:(0.4460,0.4920)). 
		

		\begin{figure}
		\includegraphics[width=.5\textwidth]{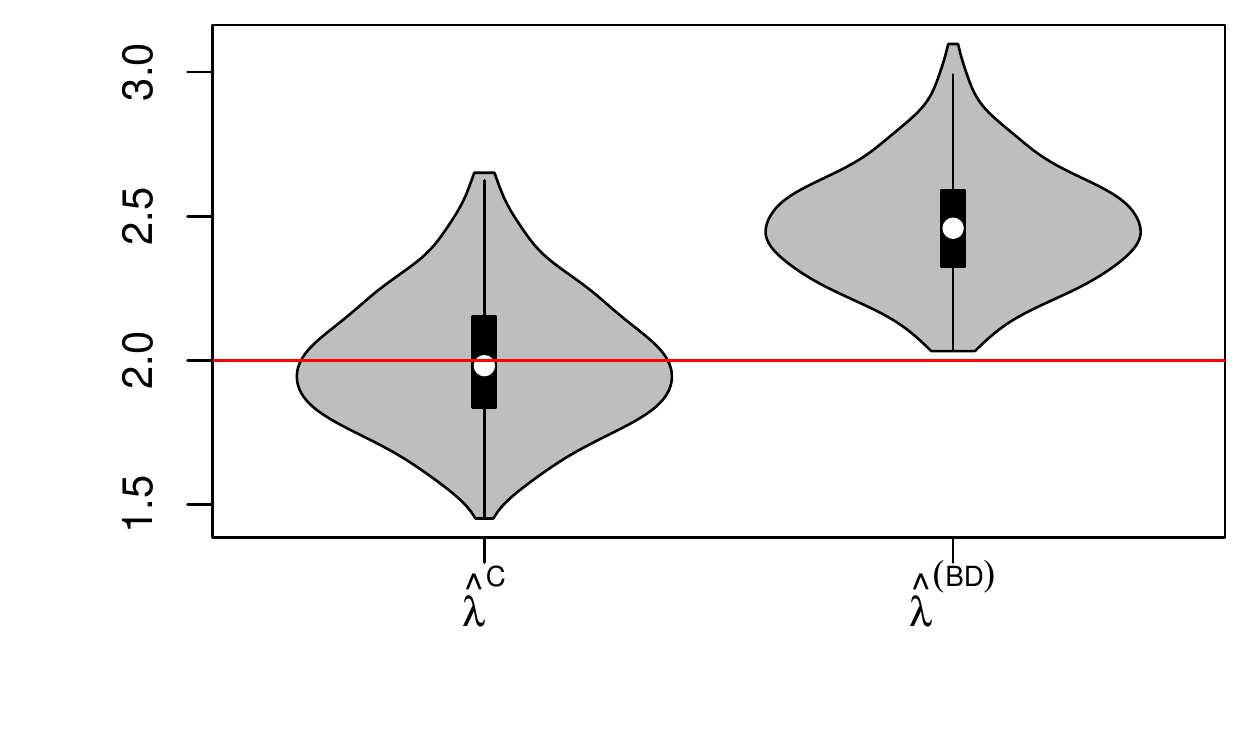}
		\caption{
		MLE birth rates based on simulations with a 50\% sample fraction, $n=100$, and with time-dependent sampling rate that increases through time. Left: Distribution of MLE birth rates from 300 simulations and using BDM and CoM estimators.
		\label{fig:vioplot-timeDepSampling-highSampFrac}}
		\end{figure}

\subsection{Asymptotic distribution of coalescent times \label{sec:dkb}}
    Some insight into why CoM and BDM give similar estimates can be gained by comparing the asymptotic distribution of coalescent times predicted by both models in the case of homochronous sampling. The distribution of coalescent times in the limit of large sample size for a deterministic coalescent model can be easily computed, and we show that this distribution is equivalent to the marginal likelihood of a node given by the birth-death model.
    
    In \cite{frost2010viral,maruvka2011recovering}, an approximation to the lineages through time for the coalescent process was presented for a population under exponential growth:
    \begin{align}
    \frac{\mathrm{d}}{\mathrm{d} s} A &= -{A(s) \choose 2} \frac{2\lambda }{Y(s)} \label{eqn:dads}
    \end{align}
    If sampling occurs at a single time point, such that $A(0)=n$, this has unique solution
    \begin{align}
    A(s) &= \frac{1}{1 - \frac{1}{n} \left(n - 1\right) e^{- \frac{\lambda \left(e^{s \left(\lambda - \mu\right)} - 1\right)}{Y_{0} \left(\lambda - \mu\right)}}},
    \end{align}
    where $Y_0$ is the population size at the time of sampling. 
    We will call this a doubly deterministic coalescent model (DDCoM) because both the demographic and genealogical processes are modeled with deterministic approximations. 
    The asymptotic distribution of coalescent times for the DDCoM is given by the derivative of $A(s)$ (equation \ref{eqn:dads} and expanding $Y(s)$ and normalizing: 
    \begin{align} 
    P_{ddcom}(s | \lambda, \mu, \rho, n)  &= -\frac{\mathrm{d}}{\mathrm{d} s} A / (n-1) \\
     &= \frac{\beta \rho e^{\frac{\beta \rho \left(e^{s \left(\beta - \gamma\right)} - 1\right)}{n \left(\beta - \gamma\right)} + s \left(\beta - \gamma\right)}}{\left(n e^{\frac{\beta \rho \left(e^{s \left(\beta - \gamma\right)} - 1\right)}{n \left(\beta - \gamma\right)}} - n + 1\right)^{2}}. \label{eqn:pddcom}
    \end{align} 
    The factor of $n-1$ normalises the distribution since there are $n-1$ nodes in the tree.
    In \cite{jewett2014theory}, the DDCoM was found to be an excellent approximation to the stochastic coalescent for large populations.

    The BDM likelihood takes the form of a product over coalescent times and sample times, including the time of origin. 
    Conditioning on the time of origin, and given a homochronous sample, the likelihood is given by the product of marginal probabilities for each coalescent time. 
    From equation \ref{eqn:bdm1}, expanding $c_1, c_2$ and simplifying:
    \begin{align}
    P_{bdm}(s | \lambda, \mu, \rho) &= \frac{ 4 \lambda \rho }{q(s, c_2) } \notag  \\
      &= \frac{4 \lambda \rho}{2 (1-c_2^2) + e^{-c_1 s} (1-c_2)^2 + e^{c_1 s} (1+c_2)^2},  \label{eqn:pbdm}
    \end{align}
    where  $c_1$ and $c_2$ are the following constants:
    	\begin{align}
    	c_1 &= | \lambda - \mu |  \\
    	c_2 &= -(\lambda - \mu - 2 \beta \rho) / c_1
    	\end{align}
    
    \begin{theorem}
    	Given a homochronous sample of a proportion $\rho$ lineages from a population growing exponentially according to the birth-death process with birth rate $\beta$, death rate $\gamma$, and $\beta > \gamma$, 
    	$$\lim_{n\rightarrow \infty} P_{ddcom}(s|n, \beta, \gamma, \rho) = P_{bdm}(s|\beta, \gamma, \rho) $$
    	for all times $s$. 
    \end{theorem}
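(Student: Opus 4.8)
The plan is to evaluate the $n\to\infty$ limit of $P_{ddcom}$ in closed form and then verify by direct algebra that it coincides, term by term, with $P_{bdm}$. Throughout I write $k = \lambda - \mu$, which is positive by hypothesis, so that the birth-death constant satisfies $c_1 = |\lambda-\mu| = k$; this sign fact is what will let the two expressions align. It is convenient to abbreviate the quantity appearing in the exponents of $P_{ddcom}$ as
\[ h(s) = \frac{\lambda\rho\,(e^{sk}-1)}{k}, \]
so that the fraction in the exponents of equation~(\ref{eqn:pddcom}) is exactly $h(s)/n$.

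The single analytic step is the limit itself. The numerator $\lambda\rho\, e^{h(s)/n + sk}$ converges to $\lambda\rho\, e^{sk}$ because $h(s)/n \to 0$. For the denominator I rewrite $n e^{h(s)/n} - n + 1 = n\bigl(e^{h(s)/n} - 1\bigr) + 1$ and apply the elementary limit $\lim_{n\to\infty} n(e^{x/n}-1) = x$ with $x = h(s)$, which gives $h(s) + 1$. Hence
\[ \lim_{n\to\infty} P_{ddcom}(s|\lambda,\mu,\rho,n) = \frac{\lambda\rho\, e^{sk}}{\bigl(h(s)+1\bigr)^2}. \]
Writing $h(s)+1 = \bigl(\lambda\rho\, e^{sk} + k - \lambda\rho\bigr)/k$, expanding the square, and dividing numerator and denominator by $e^{sk}$ recasts this limit as
\[ \frac{\lambda\rho\, k^2}{\lambda^2\rho^2 e^{sk} + 2\lambda\rho(k-\lambda\rho) + (k-\lambda\rho)^2 e^{-sk}}. \]

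It then remains to bring $P_{bdm}$ (equation~(\ref{eqn:pbdm})) into the same form. Substituting $c_1 = k$ and $c_2 = (2\lambda\rho - k)/k$, a short computation gives $1 - c_2 = 2(k-\lambda\rho)/k$ and $1 + c_2 = 2\lambda\rho/k$, whence $(1-c_2)^2 = 4(k-\lambda\rho)^2/k^2$, $(1+c_2)^2 = 4\lambda^2\rho^2/k^2$, and $1 - c_2^2 = 4\lambda\rho(k-\lambda\rho)/k^2$. Factoring the common $4/k^2$ out of $q(s,c_2)$ and cancelling it against the $4\lambda\rho$ in the numerator gives
\[ P_{bdm}(s|\lambda,\mu,\rho) = \frac{\lambda\rho\, k^2}{2\lambda\rho(k-\lambda\rho) + (k-\lambda\rho)^2 e^{-sk} + \lambda^2\rho^2 e^{sk}}, \]
whose denominator is identical, term by term, to the one obtained for the limit of $P_{ddcom}$. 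Since the equality holds for every $s$, the theorem follows.

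There is no genuinely hard step here: the only analytic ingredient is the first-order expansion $n(e^{x/n}-1)\to x$ that controls the denominator, and everything else is bookkeeping. The one place demanding care is the sign encoded in $c_1 = |\lambda-\mu|$. The hypothesis $\beta > \gamma$ is exactly what guarantees $c_1 = k > 0$, so that the $e^{\pm sk}$ terms of $P_{bdm}$ pair correctly with the exponential factors produced by the limit; were the inequality reversed, the roles of the two exponentials (and the value of $c_2$) would interchange. Thus the assumption in the statement is used precisely at this matching step.
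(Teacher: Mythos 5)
Your proof is correct and takes essentially the same route as the paper's: a first-order expansion controlling the denominator (your elementary limit $n(e^{x/n}-1)\to x$ is exactly the paper's Taylor expansion step), the trivial limit of the numerator, and then algebraic identification of the result with $P_{bdm}$. The only difference is one of completeness: you explicitly carry out the final verification via the factorizations $1-c_2 = 2(k-\lambda\rho)/k$ and $1+c_2 = 2\lambda\rho/k$ (and correctly flag where $\beta>\gamma$ enters through $c_1=|\lambda-\mu|$), whereas the paper leaves this last step as ``it may be verified.''
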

    \begin{proof}
    	By Taylor expansion of the denominator of equation \ref{eqn:pddcom}, we have
    	\begin{align}
    		& \left(n e^{\frac{\beta \rho \left(e^{s \left(\beta - \gamma\right)} - 1\right)}{n \left(\beta - \gamma\right)}} - n + 1\right)^{2}  \notag \\
    		& = \left(1 + \frac{\beta \rho \left(e^{s \left(\beta - \gamma\right)} - 1\right)}{\left(\beta - \gamma\right)}  + O(1/n) \right)^{2}  \label{eqn:p1}
    	\end{align}
    	The limit of the numerator of equation \ref{eqn:pddcom} is 
    	\begin{align}
    	\lim_{n\rightarrow \infty} & \beta \rho e^{\frac{\beta \rho \left(e^{s \left(\beta - \gamma\right)} - 1\right)}{n \left(\beta - \gamma\right)} + s \left(\beta - \gamma\right)} \notag \\
      & = \beta \rho e^{s \left(\beta - \gamma\right)} \label{eqn:p2}
    	\end{align}
    	Taking the large $n$ limit of equation \ref{eqn:p1} and computing the ratio of \ref{eqn:p1} and \ref{eqn:p2},  and rearranging, we have
    	\begin{align}
    	\lim_{n\rightarrow \infty} P_{ddcom}(s|\beta,\gamma, \rho) &=  \frac{\beta \rho (\beta-\gamma)^2 e^{s(\beta-\gamma)}}{\left( \beta - \gamma - \beta \rho + \beta \rho e^{s(\beta - \gamma)}  \right)^2}
    	\end{align}
    	It may be verified that this is equivalent to $P_{bdm}$ (equation \ref{eqn:pbdm}).
    \end{proof}
    
    Note that this result applies to the DDCoM model and not the coalescent model used elsewhere in the text. 
    In \cite{chen2013asymptotic,jewett2014theory} it was shown that the lineages through time given by DDCoMs are generally excellent approximations to lineages through time given by standard CoMs if the sample size is large.

    Outside of the large-$n$ limit, we can investigate the similarity of $P_{bdm}$ and $P_{ddcom}$ numerically. 
    To summarize the difference between distributions $P_{bdm}$ and $P_{ddcom}$, we compute the Kullback–Leibler divergence:
    $$  D(P_{ddcom}, P_{bdm} | \lambda, \mu, \rho, n) = \int_{s=0}^\infty \log(\frac{P_{ddcom}(s|\lambda, \mu, \rho, n) }{P_{bdm}(s | \lambda, \mu, \rho)}) P_{ddcom}(s|\lambda, \mu, \rho, n)  \mathrm{d} s $$
    Figure~\ref{fig:dkb} shows the divergence as a function of sample sizes ranging from $n=2$ to $n=2^{14}$ and with $\lambda=1.1, \gamma=1,$ and $\rho=0.9$.
    We find that divergence is very insensitive to birth rates and sample proportion, so results are only shown for one scenario. 
    When $n=2$, the divergence is quite high, but it rapidly converges to zero. 
    We observe that to excellent approximation, the divergence scales in a very simple way as a function of sample size: $D(P_{ddcom}, P_{bdm} | \lambda, \mu, \rho, n) \approx e^{-3/2} / n$, and this is shown by the red line in Figure~\ref{fig:dkb}. 
    
    Figure~\ref{fig:dkb} also shows a comparison of the DDCoM marginal density of coalescent times with the BDM marginal likelihood with several different sample sizes and a smaller sample fraction of $\rho=0.01$. 
    When $n=2$, the distributions are quite different, but when $n=10$, they are very similar and when $n\geq 100$ they are almost indistinguishable.

    \begin{figure}
    \begin{center} \includegraphics[width=.9\textwidth]{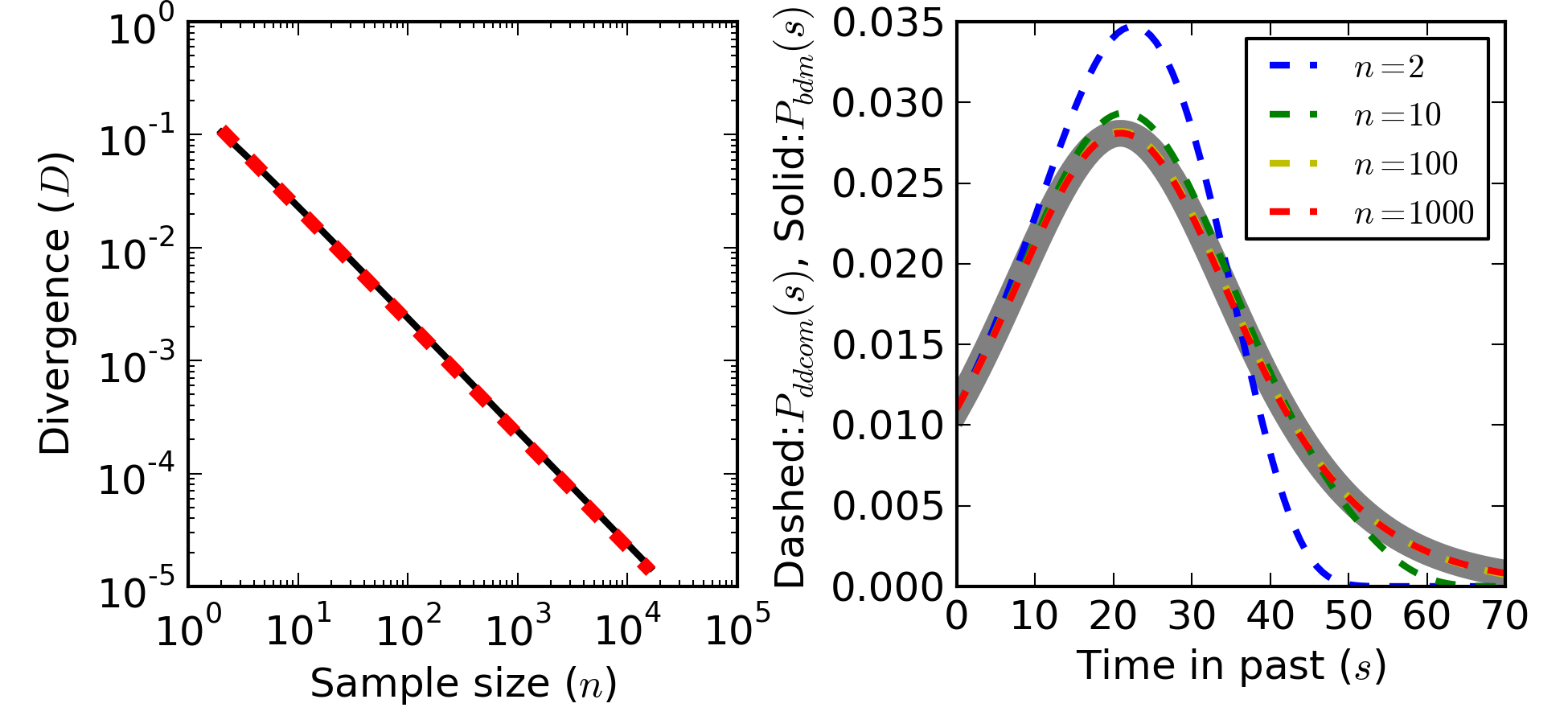} \end{center}
    \caption{ Left: The Kullback-Leibler divergence between the coalescent and birth-death distribution (black line) is shown versus the sample size on logarithmic axes. The red line shows a linear approximation ($e^{-3/2} / n$). $\lambda=1.1, \gamma=1$ and $\rho=.9$. Right: The birth-death marginal density of a node (grey) is compared with the coalescent density based on samples of size $n=2, 10, 100,$ and $1000$. $\lambda=1.1, \gamma=1$ and $\rho=0.01$.  \label{fig:dkb}} 
    \end{figure}

\section{Discussion}

Two distinct areas of concern have arisen related to phylodynamic inference using coalescent models (CoMs) and birth-death-sampling models (BDMs). 
CoMs based on a deterministic demographic process may be subject to inductive bias if the determinsitic process is a bad approximation to the true stochastic demographic process. 
Similarly, BDMs are subject to bias if the model of the sampling process is misspecified. 
We have found that the bias due to the deterministic approximation is generally very small for populations growing exponentially, 
 even when sampling 50\% of individuals from a small population. 
Furthermore, errors in CoMs due to a deterministic process can be resolved with additional computational effort, as it is possible to use the coalescent with a stochastic demographic process \cite{rasmussen2014phylodynamic,popinga2014bayesian}. 
Such methods may be necessary for populations with very small and noisy population dynamics. 
Bias is likely to be greatest if the population is small and growing slowly such that population dynamics are relatively noisy. 
Indeed, we found only one situation where the BDM was noticeably more precise than CoM estimators, which occurred with a small $R_0$ of 1.1 and a large sample fraction, 
however, we did not find a situation where the BDM estimator was substantially less biased than the CoM estimator.

We have found that BDMs can yield highly biased estimates if the sampling process is misspecified. 
It may be hard to detect if the sampling process deviates from the modeled form in many real world situations, and most real datasets are likely to violate the BDM sampling process assumptions to some degree. 
An example heterogeneous sampling through time is shown in Figure~\ref{fig:uksampprop} for a dataset which has previously been analyzed with BDMs in~\cite{kuhnert2014simultaneous}. 
Figure~\ref{fig:uksampprop} shows the sampling proportion through time of HIV sequence samples in the UK Resistance Database~\cite{brown2011transmission}
Typical for HIV sequence databases, the sample proportion is essentially zero throughout the 80s, and there is a rapid increase in sampling effort throughout the late 90s and early 00s, followed by a plateau after 2010  due to reporting delays.
In~\cite{kuhnert2014simultaneous} a BD SIR model was fitted to HIV sequence data from the UK under the assumption of a constant sampling rate, but the timespan of the estimated phylogenies ranged from 1978-2003 over which the true sampling rate varied greatly. 
Future work should explore how violation of sampling assumptions may bias estimates of $R_0$ when fitting BDM SIR models. 

\begin{figure} %
	\includegraphics[width=.5\textwidth]{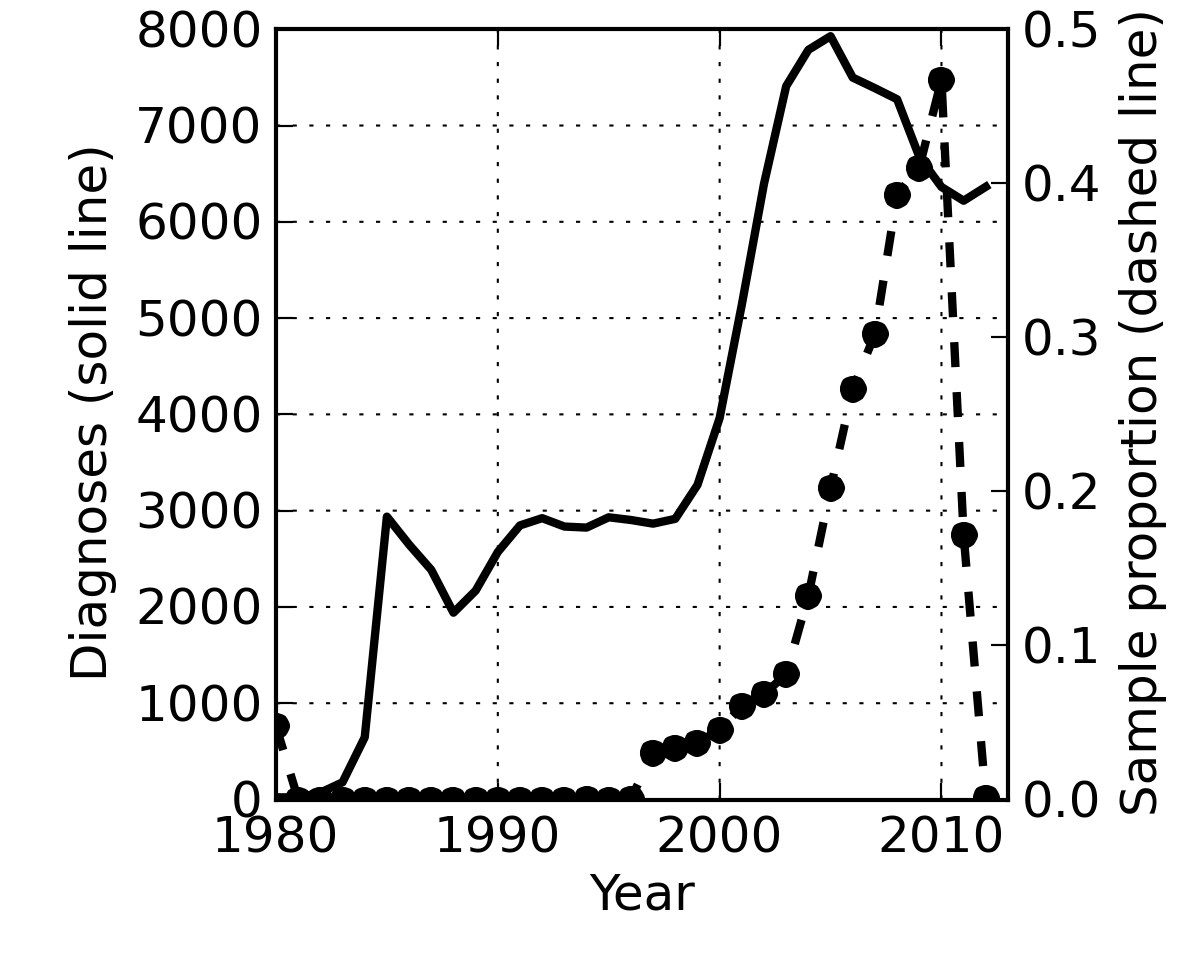}
	\caption{
	The HIV sequence sample rate through time using data from the HIV resistance database~\cite{brown2011transmission} and the number living diagnosed with HIV through time. 
	\label{fig:uksampprop}}
\end{figure}

The sequence of sample times may be informative about the population size through time if the sampling process can be correctly specified. 
We have shown how birth rates may be estimated from the sequence of sample times if sampling occurs according to the BDM assumptions, and this is possible even if the sample rate is not known. 
BDMs implicitly use the sequence of sample times to estimate birth and/or death rates, and this is the case even if the sampling rate is not given, but estimated. 
Comparisons of CoMs and BDMs should account for the effects of sampling, and a fair comparison can be obtained in the case of homochronous or serial-homochronous sampling with unknown sample rate, so that the the sample times contain no information about population size and birth rates.

Previous simulation-based studies on fitting susceptible-infected-recovered (SIR) epidemiological models to sequence data \cite{popinga2014bayesian} have purported to show increased statistical efficiency of BDMs relative to CoMs, 
but these studies did not control for the informativeness of sample times, and the supposed advantage of BDM in these simulations is likely due to the sampling model and not the genealogical model.
For example, the simulation studies in \cite{popinga2014bayesian} did not consider a homochronous sample, a misspecified sampling process, or the possibility of extending the coalescent estimators to use sample time information. 
The study in \cite{popinga2014bayesian} used a Bayesian method, in contrast to our ML methods, so some differences may also be due to the choice of priors. 
It was claimed in \cite{popinga2014bayesian} that the difference in performance of BDMs and CoMs was due to the latter's use of a misspecified deterministic demographic process, but in the context of exponential growth, we found very little bias due to the deterministic approximations of the coalescent, but large biases due to the effects of sampling. 

Future research on BDMs may reveal ways to accomodate more realistic sampling processes. 
For example, in \cite{stadler2013birth}, a piecewise constant sampling process was presented, however this also required the introduction of many more parameters to describe the sampling process. 
If the sampling process is known, a useful alternative to BDMs is model the sampling process in tandem with the coalescent.
As we have shown, the coalescent likelihood of a genealogy is approximately independent of the likelihood of the sample times, and for complex sampling processes it is much easier to model the genealogical and sampling process separately and combine likelihoods then to derive a joint likelihood. 
In most cases, the sampling process is unknown, but we have shown that CoMs are robust to a diverse range of sampling scenarios.

\textbf{Acknowledgements.} The authors thank Alison Brown (PHE UK) for providing HIV diagnosis statistics from the UK. Xavier Didelot and Caroline Colijn (Imperial College London) provided many helpful suggestions. SDWF is supported in part by an MRC Methodology Research Programme grant (MR/J013862/1). The funders had no role in study design, data collection and analysis, decision to publish, or preparation of the manuscript.

\bibliographystyle{vancouver}
\bibliography{volzETAL-bdCoSampling}

\appendix
\renewcommand{\thefigure}{S\arabic{figure}}
\renewcommand{\figurename}{Figure }
\setcounter{figure}{0}
		\begin{figure}
		{ \centering \includegraphics[width=.8\textwidth]{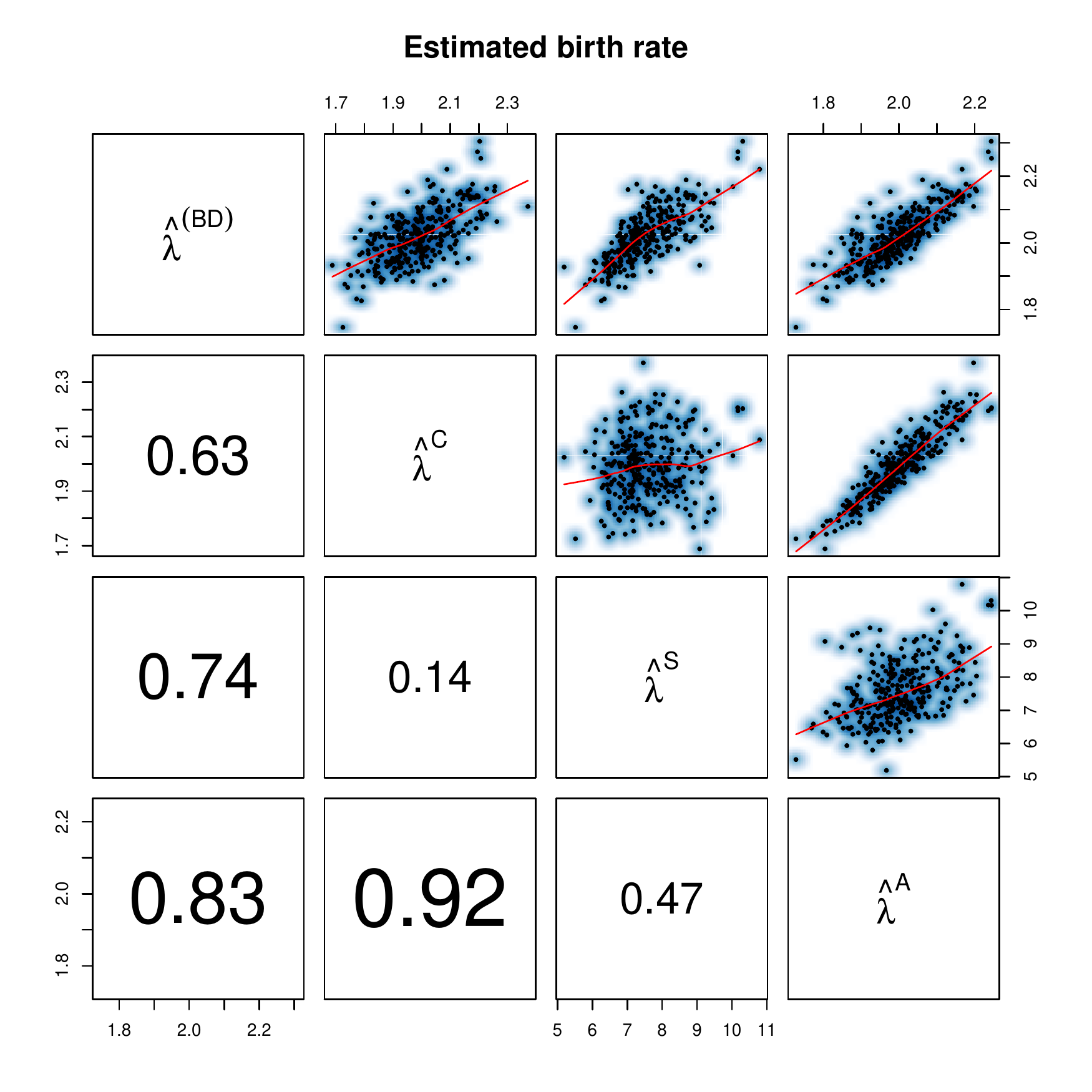}  }
		\caption{ The MLE birth rate using four estimation methods and based on 300 simulated genealogies. The Pearson correlation coefficient between the estimates is shown in the lower panels. The upper panels show a scatter plot with smoothing splines.   \label{fig:lambdapairs}}
		\end{figure}
		
		\begin{figure}
		{ \centering \includegraphics[width=.8\textwidth]{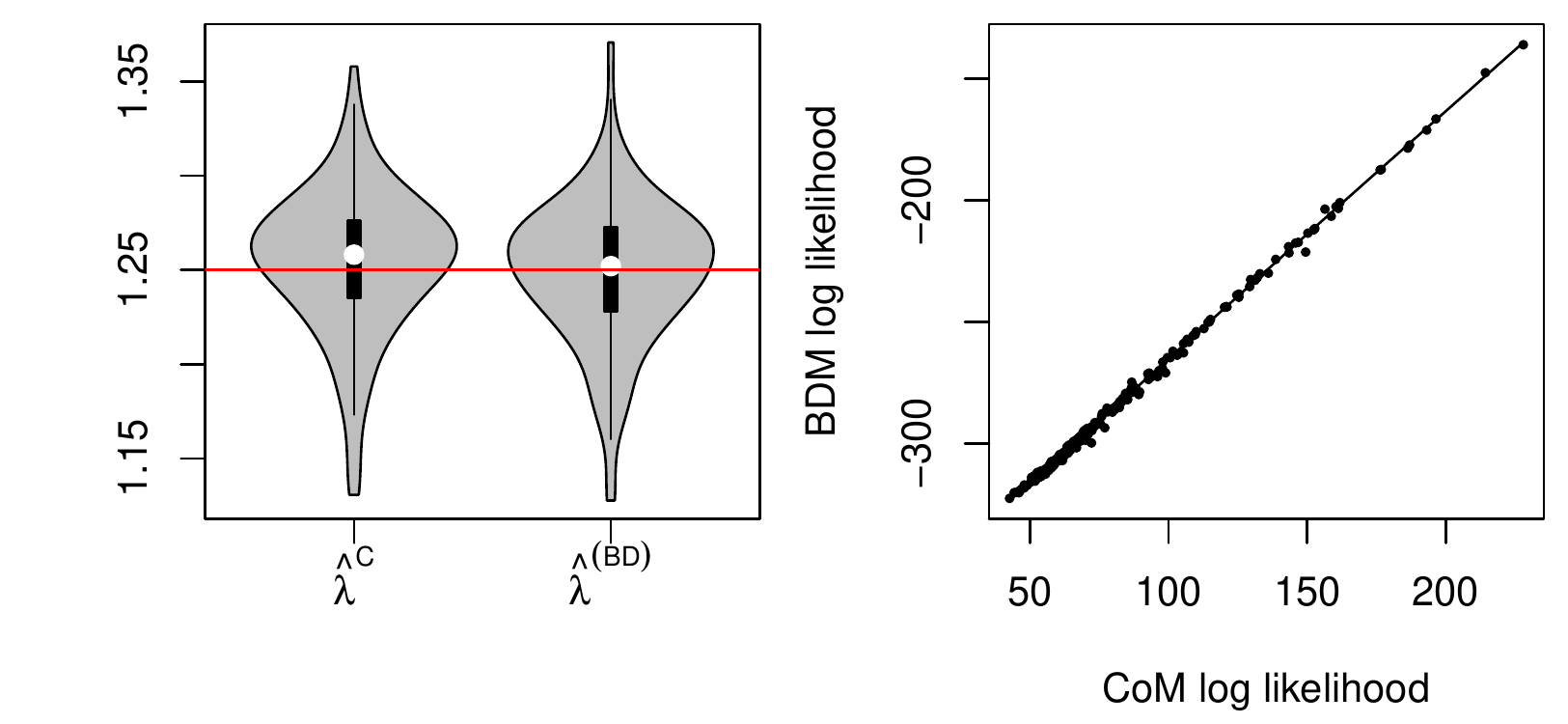} }
		\caption{ MLEs based on simulations with homochronous sampling and $\lambda=1.25$. Left: Distribution of MLE birth rates from 300 simulations and using BDM and CoM estimators. The red line shows the true parameter value.
		Right: The log likelihoods of the BDM and CoM MLEs with a smoothing spline. 
		 \label{fig:vioplot-homochronous1} }
		\end{figure}
		
		\begin{figure}
		\centering
		\includegraphics[width=.6\textwidth]{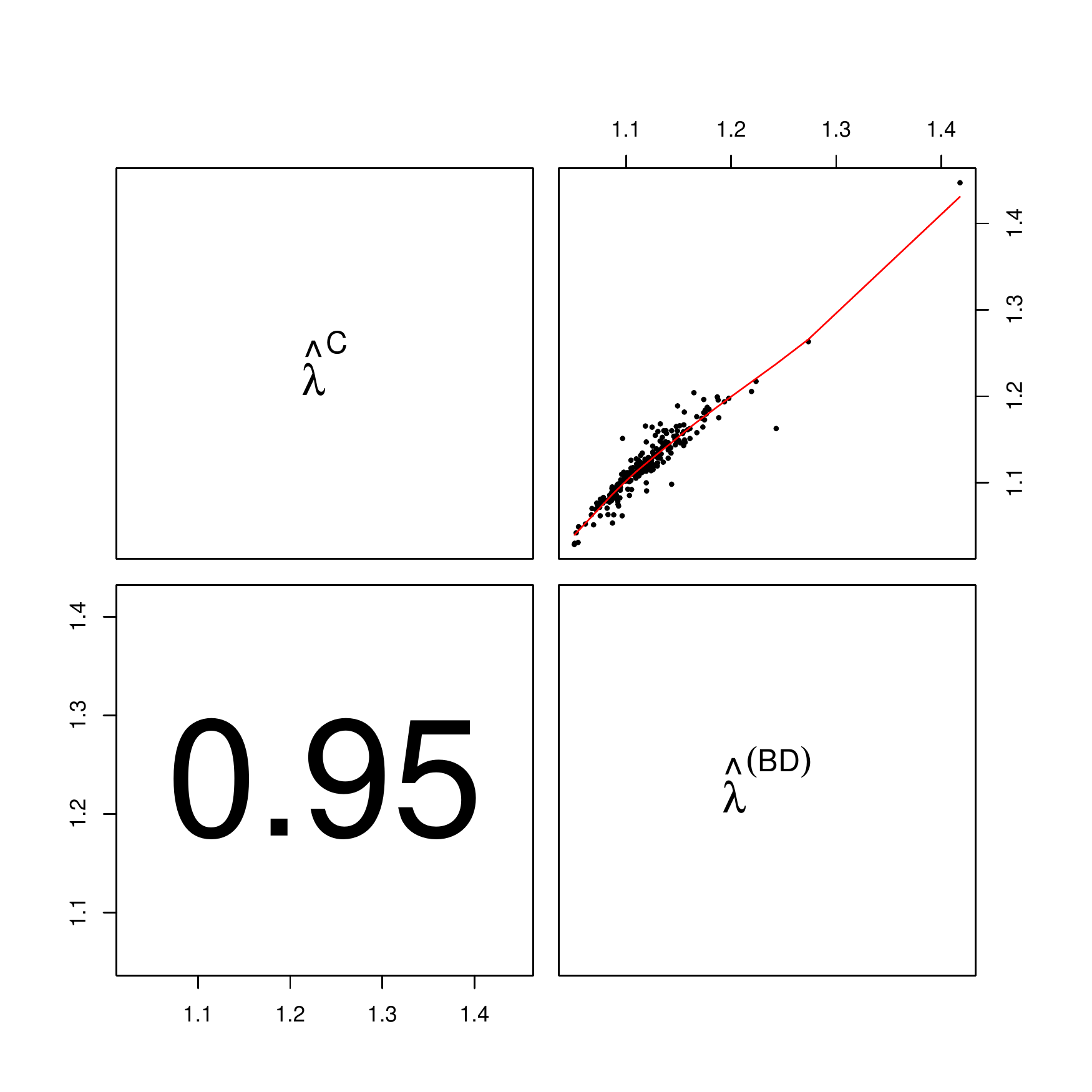}
		\caption{The BDM and CoM MLE birth rate based on 300 simulated genealogies with $\lambda=1.1$ and large sample fraction. The Pearson correlation coefficient between the estimates is shown in the lower panels. The upper panels show a scatter plot with smoothing splines.  }
		\label{fig:homochron2lambdaPairs}
		\end{figure}
\end{document}